\title{Fast Transition-Aware Reconfiguration of Liquid Crystal-based RISs}
\author{
\IEEEauthorblockN{Mohamadreza Delbari, Robin Neuder, Alejandro Jim\'{e}nez-S\'{a}ez, Arash Asadi, and Vahid Jamali}
\IEEEauthorblockA{
Technical University of Darmstadt (TUD), Darmstadt, Germany \vspace{-3mm}
%\\Emails: \{mohamadreza.delbari, robin.neuder, alejandro.jimenez\_saez, aasadi, vahid.jamali\}@tu-darmstadt.de
\thanks{Delbari and Jamali’s work was supported in part by the Deutsche Forschungsgemeinschaft (DFG, German Research Foundation) within the Collaborative Research Center MAKI (SFB 1053, Project-ID 210487104) and in part by the LOEWE initiative (Hesse, Germany) within the emergenCITY center. Neuder and Alejandro Jim\'{e}nez-S\'{a}ez's work was supported by Deutsche Forschungsgemeinschaft (DFG, German Research Foundation) – Project-ID 287022738 – TRR 196 MARIE within project C09. Asadi's work was funded by the Deutsche Forschungsgemeinschaft within the mm-Cell project}
}
}
\def\BibTeX{{\rm B\kern-.05em{\sc i\kern-.025em b}\kern-.08em
    T\kern-.1667em\lower.7ex\hbox{E}\kern-.125emX}}
\newtheorem{lem}{Lemma}
\newcommand{\defeq}{\stackrel{\triangle}{=}}
\def\bOmega{\boldsymbol{\Omega}}
\newcommand{\e}{\mathsf{e}}
\newcommand{\jj}{\mathsf{j}}
\newcommand{\Herm}{\mathsf{H}}
\newcommand{\Trans}{\mathsf{T}}
\newcommand{\y}{\mathsf{y}}
\newcommand{\z}{\mathsf{z}}
\newcommand{\bx}{\mathbf{x}}
\newcommand{\bW}{\mathbf{W}}
\newcommand{\bs}{\mathbf{s}}
\newcommand{\bH}{\mathbf{H}}
\newcommand{\ba}{\mathbf{a}}
\newcommand{\bc}{\mathbf{c}}
\newcommand{\bd}{\mathbf{d}}
\newcommand{\br}{\mathbf{r}}
\newcommand{\bbm}{\mathbf{m}}
\newcommand{\bbM}{\mathbf{M}}
\newcommand{\bX}{\mathbf{X}}
\newcommand{\bh}{\mathbf{h}}
\newcommand{\bq}{\mathbf{q}}
\newcommand{\bv}{\mathbf{v}}
\newcommand{\bt}{\mathbf{t}}
\newcommand{\bz}{\mathbf{z}}
\newcommand{\bGamma}{\boldsymbol{\Gamma}}
\newcommand{\btau}{\boldsymbol{\tau}}
\newcommand{\blambda}{\boldsymbol{\lambda}}
\newcommand{\Ex}{\mathbb{E}}
\newcommand{\Var}{\mathbb{V}}
\newcommand{\tx}{\mathrm{tx}}
\newcommand{\thr}{\mathrm{thr}}
\newcommand{\SNR}{\mathrm{SNR}}
\def\bphi{\boldsymbol{\phi}}
\def\bomega{\boldsymbol{\omega}}
\def\bPsi{\boldsymbol{\Psi}}
\def\Cset{\mathbb{C}}
\def\Rset{\mathbb{R}}
\def\Zset{\mathbb{Z}}
\def\LOS{\mathrm{LOS}}
\def\nLOS{\mathrm{nLOS}}
\def\tmax{\mathrm{max}}
\def\tx{\mathrm{tx}}
\def\rx{\mathrm{rx}}
\def\BS{\mathrm{BS}}
\def\RIS{\mathrm{RIS}}
\def\SNR{\mathrm{SNR}}
\def\eff{\mathrm{eff}}
\def\thr{\mathrm{thr}}
\def\sCN{\mathcal{CN}}
\def\Wset{\mathcal{W}}
\def\bigO{\mathcal{O}}
\begin{document}

%\begin{acronym}
%\acro{RIS}[RIS]{reconfigurable intelligent surface}
%\acro{LC}[LC]{liquid crystal}
%\acro{LOS}[LOS]{line of sight}
%\acro{BS}[BS]{base station}
%\end{acronym}

\maketitle

\begin{abstract}
  %These instructions provide guidance and a template for the preparation of a SPC Paper for an URSI Flagship conference.

Liquid crystal (LC) technology offers a cost-effective, scalable, energy-efficient, and continuous phase tunable realization of extremely large reconfigurable intelligent surfaces (RISs). However, LC response time to achieve a desired differential phase is significantly higher compared to competing silicon-based technologies (RF switches, PIN diodes, etc). The slow response time can be the performance bottleneck for applications where frequent reconfiguration of the RIS (e.g., to serve different users) is needed. In this paper, we develop an RIS phase-shift design that is aware of the transition behavior and aims to minimize the time to switch among multiple RIS configurations each serving a mobile user in a time-division multiple-access (TDMA) protocol. Our simulation results confirm that the proposed algorithm significantly reduces the time required for the users to achieve a threshold signal quality. This leads to a considerable improvement in the achievable throughput for applications, where the length of the TDMA time intervals is comparable with the RIS reconfiguration time.
  
\end{abstract}

\section{Introduction}

%These instructions provide guidance for the preparation of your Student Paper for an URSI Flagship conference. The margins and required spacing between different sections are demonstrated for author's convenience. To insert your text just highlight each section and replace with your own text. Otherwise, use this as a guideline to compare your final document in terms of margins, font sizes, and spacings.

Reconfigurable intelligent surfaces (RIS) have recently emerged in the wireless communication community as a rapidly advancing technology for realizing programmable radio environments \cite{di2019smart,yu2021smart}. These surfaces consist of sub-wavelength elements capable of dynamically altering the phases of reflected waves, thereby, e.g., establishing virtual links between base stations (BS) and mobile users \cite{Tang2021}. To have a strong and efficient virtual link, RIS must comprise a large number of elements \cite{najafi2021}. The generation of phase shifts can be achieved using various technologies, including silicon-based technologies (e.g., radio frequency (RF) switches and positive-intrinsic-negative (PIN) diodes) and liquid crystals (LC). In particular, fabrication in standard liquid crystal display (LCD) technology can be adopted for realizing large passive RISs, which brings the advantages of cost-effectiveness, scalability, low energy consumption, and continuous phase shifting. Due to these features, we focus on LC-RISs in this paper.

%\subsection{Related works}
Several recent works have reported the use of LCs as a phase-shifting material for implementing RISs~\cite{ghannam2021reconfigurable, guirado2022mm}. Authors in~\cite{aboagye2022design,ndjiongue2021re} designed and optimized LCs in RISs for visible light communication systems. An experimental design for LC-RISs was reported in \cite{Robin2023Compact}. Jim{\'e}nez-S{\'a}ez et al. in \cite{jimenez2022liquid} also reviewed different characteristics of LCs such as cost, energy consumption, response time, and compared them with other phase-shifting technologies.

%\vahid{@Mohamadreza: Check the references suggested by the reviewers of the magazine paper and see if we need to add a few more references from the communication community to the above discussion.}\mohamadreza{Now there two references from their group[7,8]}

\begin{figure}[tbp]
    \centering
    \includegraphics[width=60mm]{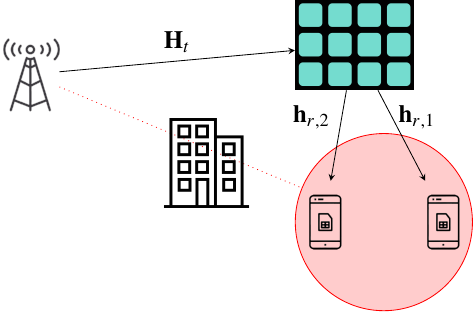}
        \caption{An RIS assists to establish a virtual link between a
transmitter and multiple receivers, while the direct channel is blocked by an obstacle.}
\vspace{-5mm}
    \label{fig:IRS}
\end{figure}

%\subsection{Our contribution}
%\textcolor{red}{This subsection should change}
Despite various advantages, LC-RISs have a slower response time for achieving the desired differential phase compared to competing silicon-based technologies (RF switches, PIN diodes, etc). This slower transition time significantly affects the efficiency of beam switching capability, leading to the extended duration for phase-shift changes, which may result in performance falling below the limits accepted in cellular/wireless networks. In this paper, we study the effects of tuning time on the performance of LC-RISs and develop a novel RIS phase-shift design that accounts for the LC response time in order to enable fast transition among multiple phase-shift configurations each aiming to serve one mobile user. To the best of the authors' knowledge, this problem has not been studied in the literature before. 
%\vahid{@Mohamadreza: Please double-check the recent literature to be sure that this statement is still correct.}\mohamadreza{After searching again, I could not find any relevant paper} 
Our main contributions are summarized as follows.

\begin{itemize}
    \item Firstly, we develop a physics-informed mathematical model for describing the LC response time based on the physical principles behind LC technology. %It is an exponential function shown in Eq. \ref{eq: exponential function}.
    \item Next, we formulate an optimization problem for jointly designing multiple RIS phase-shift configurations which allow a fast transition from one configuration to the next. Due to the non-convexity of this problem, the global optimal solution is computationally prohibitive to attain. Therefore, we derive a suboptimal low-complexity solution, which is based on the Lagrange method.
\item Finally, we comprehensively evaluate the performance of the proposed RIS design. These results reveal that compared to transition-unaware benchmark schemes, the time required to reach a given signal quality threshold at mobile users can be significantly reduced by the proposed transition-aware phase-shift design.
\end{itemize}

%The rest of this article is structured as follows: Section \ref{Section 2} introduces the system model including channel model Sec. \ref{Channel model}, physical model of one liquid crystal cell Sec. \ref{Liquid crystal model}, and time model a liquid crystal when changing the phase in Sec. \ref{Time response of LC}. In Section \ref{problem formulation}, we introduce a problem formulation and suggest an alternating optimization approach to address it. We showcase numerical outcomes and their corresponding discussions in Section \ref{simulation result}. Ultimately, Section \ref{conclusion} provides the conclusion to this paper.
%\textcolor{red}{Give specific bullet points that show the contributions of the paper}

%\subsection{Paper Organization and Notations}
\textit{Notation:} Bold capital and small letters are used to denote matrices and vectors, respectively.  $(\cdot)^\Trans$, $(\cdot)^\Herm$, $\circ$, and $\oslash$ denote the transpose, Hermitian, Hadamard product, and Hadamard division, respectively. Moreover, $\boldsymbol{0}_n$ and $\boldsymbol{1}_{n}$ denote a column vectors of size $n$ whose elements are all zeros and ones, respectively. $\bigO$ is the big-O notation.  $[\bX]_{m,n}$ and $[\bx]_{n}$ denote the element in the $m$th row and $n$th column of matrix $\bX$ and the $n$th entry of vector $\bx$, respectively.  
%$\Ex\{\cdot\}$ and $\Var\{\cdot\}$ represent expectation and variance, respectively. 
$\mathcal{CN}(\boldsymbol{\mu},\boldsymbol{\Sigma})$ denotes a complex Gaussian random vector with mean vector $\boldsymbol{\mu}$ and covariance matrix $\boldsymbol{\Sigma}$. 
%Moreover,  $\bX$ is said to follow a complex Gaussian matrix distribution $\mathcal{MCN}(\bM,\bU,\bV)$, if $\vect(\bX)\sim\mathcal{CN}(\vect(\bM),\bU\bV)$.
% where $\bM$, $\bU$, and $\bV$ denote the mean matrix, the left covariance matrix, and the right covariance matrix, respectively. 
$\Ex\{\cdot\}$ and $\Var\{\cdot\}$ represent expectation and variance, respectively. Finally $\Zset$, $\mathbb{R}$, and $\mathbb{C}$ represent the sets of integer, real and complex numbers, respectively. 
%\section{Layout}
%The URSI Student Paper Competition (SPC) paper should be single spaced, single column and \textbf{sized for A4 paper} according to this template. The paper should be no longer than \textbf{10 pages}. The body text should be Times New Roman 10~pt. Sections should be titled and numbered in bold Times New Roman 12~pt.  The top and bottom margins should be 2.5~cm (1~inch); the left- and right-hand margins should be 1.6~cm (0.63~inch).  Paragraphs should be separated with one blank line. Do not indent paragraphs. The title should be centered according to this template. The author's (or authors') name and complete organizational affiliation should be two lines below the title. If there are multiple authors, the presenter is to be identified with an asterisk.  The text should start three lines below the last name.

\section{System Model}
\label{Section 2}
In this paper, we consider a narrow-band downlink communication scenario comprising a BS with $N_t$ antenna elements, an RIS with $N$ LC-based unit cells, and $K$ single-antenna user equipment (UE). The users are served in a time-division multiple-access (TDMA) protocol. The received signal at user $k$ is given by:
\begin{equation}
\label{Eq:IRSbasic}
	y_k = \big(\bh_{d,k}^\Herm + \bh_{r,k}^\Herm \bGamma \bH_t \big) \bx_k +n_k,\quad k=1,\dots,K,
\end{equation}
where $\bx\in\Cset^{N_t}$ is the transmit signal vector, $y_k\in\Cset$ is the received signal vector at the $k$th UE, and  $n_k\in\Cset$ represents the additive white Gaussian noise (AWGN) at the $k$th UE, i.e., $n_k\sim\sCN(0,\sigma_n^2)$, where $\sigma_n^2$ is the noise power. The transmit vector $\bx_k\in\Cset^{N_t}$ can be written as $\bx_k=\bq_ks$, where $\bq_k\in\Cset^{N_t}$ is the beamforming vector for user $k$ and $s\in\Cset$ is the data symbol. Assuming $\Ex\{|s|^2\}=1$, the beamformer satisfies the transmit power constraint $\|\bq_k\|^2\leq P_t$.
%The transmit signal covariance matrix is also defined as $\bQ=\Ex\{\bx\bx^\Herm\}$ satisfying power constraint $\mathrm{tr}(\bQ)\leq P_t$.
Moreover,  $\bh_{d,k}\in\Cset^{N_t}, \bH_t\in\Cset^{N\times N_t}$, and $\bh_{r,k}\in\Cset^{N}$ denote the BS-UE, BS-RIS, and RIS-UE channel matrices, respectively. Furthermore, $\bGamma\in\Cset^{N\times N}$ is a diagonal matrix with main diagonal entries $[\bGamma]_n=[\bOmega]_n\e^{\jj[\bomega]_n}$ denoting the reflection coefficient applied by the $n$th RIS unit cell comprising phase shift $[\bomega]_n$  and reflection amplitude $[\bOmega]_n$.

\subsection{Channel model}
\label{Channel model}
%\textcolor{red}{Coordinate system and how RIS placed}
We assume that an RIS is deployed such that it has a line of sight (LOS) link to both BS and users. Therefore, we model the links by Rician fading. For the ease of presentation, we present the channel model for a general multiple-input and multiple-output (MIMO) channel matrix $\bH\in\Cset^{N_\tx\times N_\rx}$ with $N_\tx$ and $N_\rx$ being the number of transmit and receive antennas, respectively, which can be then applied to $\bH_t$, $\bh_{r,k}$, and $\bh_{d,k},$ $\,\,\,\forall k$, with proper modification. A Rician MIMO channel model can be written as
$
\label{eq: channel model}
    \bH=\sqrt{\frac{K_f}{K_f+1}}\bH^{\mathrm{LOS}}+\sqrt{\frac{1}{K_f+1}}\bH^{\mathrm{nLOS}},
$
where $K_f$ denotes the $K$-factor
and determines the relative power of the LOS component to the non-LOS components of the channel. $\bH^\LOS$ and $\bH^\nLOS$ are given by
\begin{subequations}
    \begin{align}
	&\bH^{\mathrm{LOS}} = c\, \ba_{\rx}(\bPsi_\rx)\ba_{\tx}^\Herm(\bPsi_\tx), \label{Eq:LOS}\\
&[\bH^\nLOS]_{n,m}\sim\sCN(0,\sigma_\nLOS^2).\label{Eq:nLOS}
\end{align}
\end{subequations}

Here, $c^2$ and $\sigma_\nLOS^2$ determine the power of the LOS and nLOS channels, respectively, and $\ba_{\tx}(\cdot)\in\Cset^{N_\tx}$ and $\ba_{\rx}(\cdot)\in\Cset^{N_\rx}$ denote the Tx and Rx array steering vectors, respectively. Moreover, $\bPsi_\tx=(\theta_\tx,\phi_\tx)$ and $\bPsi_\rx=(\theta_\rx,\phi_\rx)$ are the Tx angle-of-departure (AoD) and Rx angle-of-arrival (AoA), respectively. Here $\theta_\tx$ and $\phi_\tx$ denote the elevation and azimuth angles, respectively. %The structure of the steering vector depends on the array manifold. For example, for a uniform planar array (UPA) positioned on $\y$ and $\z$ axes with $N_\y$ and $N_\z$ elements, respectively, we have  
%\begin{equation}
%\label{Eq:steering}
%	\ba(\bPsi) = \left[\e^{\jj\kappa\bd^\Trans\!(\bPsi)\bu_1},\dots,\e^{\jj\kappa\bd^\Trans\!(\bPsi)\bu_N}\right]^\Trans,
%\end{equation}
%where $\bd(\bPsi)=[\cos(\theta) \cos(\phi), \cos(\theta) \sin(\phi), \sin(\theta)]^\Trans,$ and $\bu_{(n_\y,n_\z)}=[0,n_\y d_\y, n_\z d_\z]^\Trans$. Here, $\kappa=\frac{2\pi}{\lambda}$  is the wave number, $\lambda$ denotes the wavelength, and $\bd(\bPsi)\in\Rset^3$ is a unit vector pointing towards the direction of angle $\bPsi$. Spaces between antennas along the $\y$- and $\z$-axes denoted as $d_\y$ and $d_\z$ respectively, and $n_\y=0,\dots,N_\y-1$ and $n_\z=0,\dots,N_\z-1$.

\subsection{LC phase shifter}
%\textcolor{red}{schematic figure and how the time response may look like }\\
%\textcolor{red}{In a section explain all positive , negative ,...}
\label{Liquid crystal model}
The operational mechanism of LC relies on their electromagnetic anisotropy, where the electromagnetic properties of LC molecules differ based on their orientation relative to the RF electric field \cite{jimenez2022liquid}. Specifically, owing to the ellipsoidal shape of LC molecules, the LC exhibits higher permittivity (corresponding to greater phase shift) when the electric field $\vec{E}_{\rm RF}$ aligns with the major axis of the molecules $\vec{n}$, compared to alignment with the minor axis, as illustrated in Fig.~\ref{fig:V_phase}. Consequently, the manipulation of LC molecules' orientation provides a means to modify the phase shift of RF signals.
To achieve this effect, a slim layer of LC in the order of $\mu m$ is positioned between two electrodes. Under no voltage application, the molecules remain in their relaxed state, leading to $\vec{E}_{\rm RF}$ being perpendicular to $\vec{n}$ and resulting in the LC showing minimal relative permittivity $\varepsilon_{r,\perp}$. Conversely, applying the maximum voltage aligns the LC molecules with the induced external electric field, causing $\vec{E}_{\rm RF}$ to align parallel to $\vec{n}$ and thereby achieving a maximum relative permittivity $\varepsilon_{r,\parallel}$. The uppermost feasible phase shift, $\Delta\omega_{\max}$, scales with $\Delta n_\varepsilon\defeq\sqrt{\varepsilon_{r,\parallel}}-\sqrt{\varepsilon_{r,\perp}}$ as follows:
\vspace{-2mm}
\begin{equation}
    \Delta\omega_{\max}=2\pi l\Delta n_\varepsilon\frac{f}{c},
    \label{eq:omega epsilon}
\end{equation}
where $l$ is the phase-shifter length, $f$ is the frequency, and $c$ is the speed of light. In (\ref{eq:omega epsilon}), we used $\lambda=\frac{c}{f\sqrt{\varepsilon_r\mu_r}}$, where $\lambda$ denotes the wavelength and the relative permeability $\mu_r$ of LC is typically one \cite{garbovskiy2012liquid}.
%\textcolor{red}{What is voltage?}

\begin{figure}[tbp]
	\centering
    \includegraphics[width=65mm]{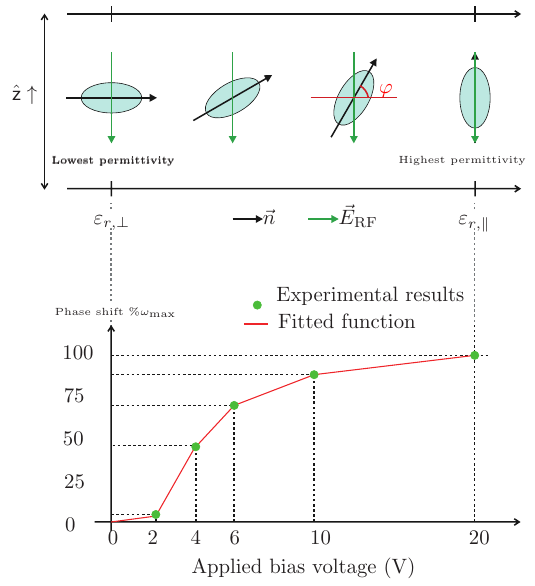}
    \caption{Phase shift vs. applied voltage for LC with 4.6 $\mu m$ LC layer thickness. The experimental data is taken from \cite{neuder2023architecture}, being fitted by a piecewise linear function.}
    \vspace{-5mm}
    \label{fig:V_phase}
\end{figure}
Fig.~\ref{fig:V_phase} also illustrates the relationship between the applied voltage and the phase shift. The phase shift can be smoothly adjusted from $0$ to $\omega_{\mathrm{max}}$.
%The required voltage should be more than $2 V$ for this technology (LC with 4.6 $\mu m$ width \cite{jimenez2022liquid}), and a maximum phase shift with $V_{\mathrm{max}}=20 V$ can be attained.
The applied bias voltage is alternating current (AC) with frequency 1~kHz because applying direct voltage would degrade the LC performance \cite{xu2014image}. For future reference, we define the function $\bomega=f(\bv)\label{eq:fv}$ describing the relation between changing phase and voltage,
where $\bv\in\Rset^N$ and $\bomega\in\Rset^N$ are applied voltages and corresponding phase shifts of all RIS elements.

%\section{Equations}

%Equations should be centered with the equation numbers right justified in the following format:
%\begin{equation}
%  \label{eq:T}
%  T' = \frac{r_\text{e}^2\lambda^2GC_\text{s}L
%    \sec\theta\sqrt{\pi}\ \Gamma\!\left(\frac\rho2\right)}
%  {4\pi^2\,\Gamma\!\left(\frac{\rho+1}{2}\right)}.
%\end{equation}

\section{Transition-Aware LC-RIS Phase-shift Design}
\label{Transition-Aware LC-RIS}

In this section, we first develop a model for characterizing LC-RIS response time. Subsequently, we formulate an optimization problem to jointly design multiple phase-shift configurations that enable fast transition among them.

\subsection{LC-RIS response time modeling}
\label{Time response of LC}
The response time of LCs is different for positive and negative changes of phase shift \cite{nobles2021eight}. In particular, for positive phase shifts, the electric field caused by the applied voltage determines the response time. On the contrary, for the negative phase shifts, the anchoring forces in LC in the absence of the voltage are the determining factors for response time.

 \textbf{Mathematical characterization:} 
 As discussed, the phase shift $\omega$ applied by an LC unit cell is directly related to the orientation angle of the LC molecules, denoted by $\varphi$ which can be controlled by an applied electric field $\vec{E}_{\rm RF}$ (i.e., an applied voltage). The value of $\varphi$ is a function of time, $t$, and the LC molecule distance to the electrodes, $z$ (see Fig.~\ref{fig:V_phase}), and is characterized for an AC control voltage by the Erickson-Leslie differential equation \cite{ericksen1961conservation,leslie1968some}:
\vspace{-1mm}
\begin{IEEEeqnarray}{ll}(K_{11}\cos^2\varphi+K_{33}\sin^2\varphi)\frac{\partial^2 \varphi}{\partial z^2}+(K_{33}-K_{11})\sin\varphi \nonumber\\ \qquad \times \cos\varphi(\frac{\partial\varphi}{\partial z})^2 
    +\varepsilon_0\Delta\varepsilon E^2\sin\varphi\cos\varphi=\gamma_1(\frac{\partial\varphi}{\partial t}).
\end{IEEEeqnarray}
Here, $\gamma_1$ is the LC's rotational viscosity, $K_{11}$ and $K_{33}$ represent the LC splay and bend elastic constants, respectively, $\varepsilon_0\Delta\varepsilon E^2$ is the electric field energy density, $\varepsilon_0$ is vacuum permittivity, and $\Delta\varepsilon$ is the LC dielectric anisotropy. In general, this equation cannot be solved analytically, however, using the $K_{11}\approx K_{33}$ approximation, Jakeman and Raynes calculated the reorientation times of the LC director when the electric field is switched off, i.e., $E=0$, in \cite{jakeman1972electro}. In this case, we obtain:
\begin{equation}
\label{eq:exponential function g}
    \varphi(z,t)\approx \varphi_0\sin(\frac{\pi z}{l_g})\exp{(-\frac{t}{\tau_0})},
\end{equation}
where $\varphi_0$ is the maximum tilt angle of the LC directors (i.e., at $z=\frac{l_g}{2}$) at the initial time $t=0$, $l_g$
is the LC thickness, and $\tau_0$ is the LC
director reorientation time constant, which is a function of LC parameters $\gamma$, $K_{11}$, and $K_{33}$. When $E$ is nonzero, with further assumption, (i.e., $\sin\varphi\cos\varphi\approx\varphi(1-\frac{\varphi^2}{2})$), authors in \cite{jakeman1972electro} derived a similar exponential solution which has a different time exponent and is omitted here due to space constraints, see \cite{wang2005studies}. For future reference, we denote the time exponents in the absence and presence of the bias voltage by $\tau_0^-$ and $\tau_0^+$, respectively, where $\tau_0^-\gg \tau_0^+$ holds \cite{neuder2023architecture}, see Fig.~\ref{fig:time response+-}.

For RISs, not the absolute value of the phase but rather the changes in the phase is important. In order to mathematically model the transition, we define a function $g$ for showing how the RIS phase-shift $\omega$ is temporally evolved when the initial phase shift is $\omega_0$ and the desired phase is $\omega_d$
\begin{equation} \label{omega transition}
    \omega(t)=g(t;\omega_0,\omega_d).
\end{equation}
Note that function $g$ should satisfy the following asymptotic property:
\begin{equation}
\label{eq: exponential function}
    \omega_d= \lim_{t \to +\infty} g(t;\omega_0,\omega_d)=f(v_d).
\end{equation}
%\textcolor{red}{What is the fundamental of this assumption?}
where $f(\cdot)$ is given in \eqref{eq:fv}, see Fig.~\ref{fig:V_phase}. 
Based on the solution in (\ref{eq:exponential function g}) and the fact that the phase-shift $\omega$ of one LC molecule keeps the same form of the tilt angle $\varphi$ of it \cite{wang2005studies,wu1990experimental} but with different time constants in two different cases rise and decay,  function $g$ is an exponential function that assumes different time exponents for positive and negative changes in phase-shift and must satisfy (\ref{eq: exponential function}). This leads to the following response time for LC unit-cell $n, \forall n$:
\begin{equation}
\label{eq:general time response}
    [\bomega(t)]_n=g(t;[\bomega_0]_n,[\bomega_d]_n)=[\bomega_d]_n+([\bomega_0]_n-[\bomega_d]_n)e^{\frac{-t}{[\btau_{c}]_n}},
\end{equation}
where $\btau_{c}\in\{\tau_c^+,\tau_c^-\}^N$ and its $n$th element is equal to $\tau_c^+$ if $[\bomega_d]_n\geq [\bomega_0]_n$ and is equal to $\tau_c^-$ if $[\bomega_d]_n < [\bomega_0]_n$ \cite{wang2005studies}.
 %\textcolor{red}{Notation should correct. t is a vector}
\begin{figure}
    \centering
\includegraphics[width=65mm]{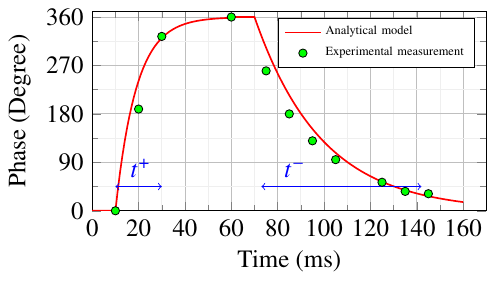}
    \caption{Experimental result of an LC phase shifter response time (green points, $t^+\approx15$~ms, $t^-\approx72$~ms) \cite[Fig.~4a]{neuder2023architecture}.}
    \vspace{-5mm}
    \label{fig:time response+-}
\end{figure}

\textbf{Over-and undershooting technique:} In order to achieve a faster transition, we apply the so-called over- and under-shooting technique \cite{hassanein2019optimizing} described in the following. For increasing phase-shift, i.e., $[\bomega_d]_n>[\bomega_0]_n$, instead of directly applying the desired voltage control $[\bv_d]_n=f^{-1}([\bomega_d]_n)$, we apply the maximum voltage $[\bv]_n=f^{-1}(\omega_\tmax)$ and when the desired phase shift is achieved, change the voltage control to the desired value $[\bv_d]_n$.
The time required for $n$th cell to reach the desired phase shift can be computed by replacing in Eq. (\ref{eq:general time response}) and solving the following equation for $t$:
\begin{equation}
    [\bomega_d]_n=\omega_\tmax+([\bomega_0]_n-\omega_\tmax)e^{\frac{-t}{\tau^+_c}},
\end{equation}
which leads to
\begin{equation}
    \label{eq: time positive}    [\bt_r]_n=\tau^+_c\ln{(\frac{\omega_{\text{max}}-[\bomega_0]_n}{\omega_{\text{max}}-[\bomega_d]_n})}=\tau^+_c\ln{(1+\frac{[\bomega_d]_n-[\bomega_0]_n}{\omega_{\text{max}}-[\bomega_d]_n})},
\end{equation}
where $[\bt_r]_n$ is the time where phase of $n$th cell reaches its desired phase. 
%\vahid{@Mohamadreza: Use $\btau$ for the time exponents and a different variable such as $\mathbf{t}$ for the actual time. Using the same variable with different indices can be confusing!}\mohamadreza{Is it fine now?} 
Similarly, when the phase shift should be decreased, we can also use the undershooting technique by first setting the voltage to zero and applying the desired voltage when the phase shift is reached. This results in the following tuning time when $[\bomega_d]_n<[\bomega_0]_n$ after replacing in Eq. (\ref{eq:general time response}):
\begin{equation}
    \label{eq: time negative}
    [\bt_r]_n=\tau^-_c\ln{(\frac{[\bomega_0]_n}{[\bomega_d]_n})}=\tau^-_c\ln{(1+\frac{[\bomega_0]_n-[\bomega_d]_n}{[\bomega_d]_n})}.
\end{equation}

Eqs.~\eqref{eq: time positive} and \eqref{eq: time negative} describe the transition behavior of LC unit cells to achieve a desired phase shift and are used in our simulation results for performance assessment.

%\textit{Remark 1:} In the RIS literature, often it is assumed that the maximum required differential phase shift to be $\omega_\tmax=2\pi$, which is due to the periodicity of the complex exponential function $\e^{\jj\omega}$ over $(0,2\pi)$. While this is true for the asymptotic performance of the LC-RISs after transient behavior, having a $\omega_\tmax\geq2\pi$ can be indeed useful for realizing fast transition. However, here we assumed $\omega_\tmax=2\pi$ and we did not analyze this fact due to space limitation.

\subsection{Transition-aware phase-shift design}
\label{problem formulation}

%\textcolor{red}{Effect of MIMO?}
Next, assuming a TDMA protocol, we design multiple LC-RIS phase-shift configurations that allow a fast transition from one configuration to the next each serving one user. As a quality of service (QoS), we consider the signal-to-noise ratio (SNR) for user $k,\,\,\forall k$, defined as
%\begin{align}
%   &\SNR=\frac{P}{\sigma_n}|h|^2= \\
%   &\frac{P}{\sigma_n}|\sum_{n=1}^{N}(\Omega_n\e^{\jj g(t;\bomega_{n,0},\bomega_{n,d})+\jj\kappa(\bd^\Trans\!(\bPsi_\rx)-\bd^\Trans\!(\bPsi_\tx))\bu_n})|^2,
%\end{align}
\vspace{-1mm}
\begin{IEEEeqnarray}{ll}
\label{eq: SNR}
    \SNR_k=\frac{|(\bh_k^\eff)^\Herm\bq_k|^2}{\sigma^2_n},\vspace{-2mm}
\end{IEEEeqnarray}
%\textcolor{red}{(Read: Codebook-Based User Tracking in IRS-Assisted mmWave Communication Networks) We can also make more concise this part}
where $(\bh_k^\eff)^\Herm=\bh_{d,k}^\Herm + \bh_{r,k}^\Herm \bGamma_k \bH_t$ is the end-to-end channel for the $k$th user accounting for the impact of the RIS. For simplicity, we assume the amplitudes of the reflection coefficients are one, i.e., $|[\bGamma_k]_n|=[\bOmega]_n=1,\,\,\forall n$.

By convention, we assume users are served in order of their indices in one-time frame, i.e., first user 1, then user 2, until user $K$, and repeat this procedure, see Fig. \ref{fig:user_TDMA}. The duration of the time frame depends on the delay requirement. The update rate of the LC-RIS phase-shift (i.e., the number of time frames the phase-shifts remain valid) depends on the users' velocity and direction. Eqs. (\ref{eq: time positive}) and (\ref{eq: time negative}) show that the higher the change in phase shifts, the higher the required tuning time. In this paper, we aim at achieving a given required SNR threshold per user, denoted by $\gamma_k^\thr$, by applying the minimum changes in the phase shift from previous phase-shift reconfiguration. To mathematically formulate the problem, we define for $k=1,..., K$:
\vspace{-3mm}
\begin{IEEEeqnarray}{ll}
\Delta\bomega_k\defeq
\begin{cases}
    \bomega_k-\bomega_{k-1},\quad &\mathrm{if}\,\, k\neq 1\\
    \bomega_{1}-\bomega_K,\,\,&\mathrm{if}\,\, k=1.
\end{cases}\vspace{-1mm}
\end{IEEEeqnarray}

Since directly minimizing the required tuning time in (\ref{eq: time positive}) and (\ref{eq: time negative}) is challenging, we aim at minimizing the phase-shift changes when reaching a QoS while accounting for the fact that increasing and decreasing phase-shifts occur with different time constants. In particular, we adopt as an optimization criterion the sum of weighted differential phase shifts $\sum_{k=1}^K\|\bc_{k}\circ\Delta\bomega_k\|^2$, where $\bc_{k}\in\{c^+,c^-\}^N$ is factor obtained as:\vspace{-2mm}
\begin{align}
    [\bc_{k}]_n = \begin{cases}
    c^+,\quad &\mathrm{if}\,\,[\Delta\bomega_k]_n\geq 0\\
    c^-,\quad &\mathrm{if}\,\,[\Delta\bomega_k]_n < 0.
    \end{cases}\vspace{-2mm}
\end{align}
Since decreasing LC phase shifts is more time-consuming than increasing them (i.e., $\tau_c^-\gg\tau_c^+$), it is reasonable to assign weights such that $c^-\gg c^+$ holds. 
%In fact, since \eqref{eq: time positive} and \eqref{eq: time negative} reveal that the tuning time scales with the time constants, one choice is to assume  $c^-\propto\sqrt{\tau_c^-}$ and $c^+\propto\sqrt{\tau_c^+}$ such that the cost function linearly scales with the time constants.
%$[\bc_{k}]_n\defeq\sqrt{[\btau_{c,k}]_n}, \,\forall n,k$ is a vector of binary constants. This indirectly reduces the required tuning time. 
Using this cost function, we formulate the following optimization problem:
\vspace{-2mm}
\begin{subequations}
\label{eq:optimization 1}
\begin{align}
    \text {P1:}\quad&~\underset{\bomega_k,\bq_k,\,\,\forall k}{\min}~\sum_{k=1}^K\|\bc_{k}\circ\Delta \bomega_k\|_2^2
    \\&~\text {s.t.} ~~\SNR_k\geq \gamma_k^\thr,\,\, \forall k
    \\&\quad\hphantom {\text {s.t.} } 0\leq [\bomega_k]_n < \omega_\tmax, \forall n, k,
    \\&\quad\hphantom {\text {s.t.} } \|\bq_k\|_2^2\leq P_t, \forall k.
    %\\&\quad\hphantom {\text {s.t.} } \bQ \succeq 0.
\end{align}
\end{subequations}

Here, (\ref{eq:optimization 1}b) is the user SNR constraint, (\ref{eq:optimization 1}c) is the realizable phase-shift range with $\omega_\mathrm{max}=2\pi$ being the maximum unit-cell phase-shift, (\ref{eq:optimization 1}d) forces the transmit power constraint. Problem P1 is a non-convex problem optimization because of the constraint (\ref{eq:optimization 1}b). $\bq_k$ indirectly influences the cost function via the SNR in (\ref{eq:optimization 1}b). Due to the non-convexity of the problem and the coupling of variables $\bq_k$ and $\bomega_k$ in (\ref{eq: SNR}), obtaining a globally optimal solution is computationally challenging. In the following, we obtain an efficient sub-optimal solution using alternative optimization (AO). 

\begin{figure}[t]
	\centering
    \includegraphics[width=0.49\textwidth]{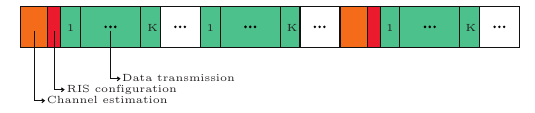}
    \caption{Block diagram of the proposed RIS-assisted downlink transmission
scheme, including the sub-blocks of channel estimation, and RIS phase-shift design.}
    \label{fig:user_TDMA}
    \vspace{-5mm}
\end{figure}

\textbf{Beamformer design:} When $\bomega_k$ is given and fixed for each user, problem $\text{P1}$ reduces to a feasibility problem because $\bq_k$ does not appear in the cost function. Alternatively, one can solve the following equivalent sub-problem:
\vspace{-2mm}
\begin{subequations}
\label{eq:optimization 2}
\begin{align}
    \text {P2:}\quad&~ \underset{\bq_k}{\max}~\SNR_k
    \\&~\text {s.t.} ~~\|\bq_k\|_2^2\leq P_t.
\end{align}
\end{subequations}
This is a standard SNR-maximization problem in multi-input single-output (MISO) systems, which  has the following match-filter precoder solution \cite{tse2005fundamentals}: %(\ref{eq: SNR}) and (\ref{Eq:LOS}) can be written as
\begin{equation}
\label{eq: SNR with LOS}
 \bq_k = \frac{\sqrt{P_t}}{\|\bh_k^\eff\|} \bh_k^\eff 
\overset{(a)}{\approx}  \frac{\sqrt{P_t}}{\|\ba_\BS(\bPsi_\BS)\|} \ba_\BS(\bPsi_\BS),
\end{equation}
where the problem is feasible if $P_t\frac{\|\bh_k^\eff\|^2}{\sigma_n^2}\geq \gamma_k^\thr,\,\,\forall k$. Approximation $(a)$ holds for typical RIS deployment for coverage extension, when the direct channel BS-UE is blocked, i.e., $\bh_{d,k}\approx\boldsymbol{0}_{N_t}$, and the BS-RIS link is LOS dominant with $\bPsi_\BS$ and $\bPsi_\RIS$ denoting the AoD from BS and the AoA on the RIS, respectively  \cite{Zhang2020Capacity}. This LOS solution, denoted by $\bq_\LOS$, does not depend on $\bomega_k$, and hence does not change in the AO algorithm. For the remainder of this paper, we focus on  the RIS deployment scenario for coverage extension and adopt beamformer $\bq_\LOS$.

\textbf{RIS design:} Now, we assume the beamformer $\bq_\LOS$ is fixed and optimize $\bomega_k$. The phase configuration problem is:
\vspace{-2mm}
\begin{subequations}
\label{eq:optimization 3}
\begin{align}
    \text {P3:}\quad&~\underset{\bomega_k,\,\,\forall k}{\min}~\sum_{k=1}^K\|\bc_{k}\circ\Delta \bomega_k\|_2^2
    \\&~\text {s.t.} ~~\SNR_k\geq \gamma_k^\thr,\,\, \forall k
    \\&\quad\hphantom {\text {s.t.} } 0\leq [\bomega_k]_n < 2\pi, \forall n, k,
\end{align}
\end{subequations}
where the SNR can be rewritten as
\begin{equation}\label{eq:SNRquad}
    \SNR_k=\frac{\|\bbm_k^\Herm\bs_k\|^2}{\sigma_n^2}\|\bq_\LOS\|^2\|\ba_\BS^\Herm(\bPsi_\BS)\|c^2=\bs_k^\Herm\bbM_k\bs_k,
\end{equation}
with $\bs_k=[\e^{j[\bomega_k]_1}, ..., \e^{j[\bomega_k]_N}]^\Trans$ and $\bbM_k=c^2\frac{\|\ba_\BS^\Herm(\bPsi_\BS)\|\|\bq_\LOS\|^2}{\sigma_n^2}$ $\bbm_k\bbm_k^\Herm$ 
with $\bbm_k=\mathrm{diag}(\bh_{r,k}^\Herm)\ba_{\RIS}(\bPsi_\RIS)$ where $c^2$ is defined as the power gain of the BS-RIS link.

Problem P3 is non-convex due to the non-convex SNR function in \eqref{eq:SNRquad} in variable $\bomega_k,\,\,\forall k$, and the dependency of $\bc_{k}$ on $\Delta \bomega_k$ in the objective function. In the following, we develop a sub-optimal solution to P3  using the Lagrange formulation. First let us define:
\vspace{-2mm}
\begin{align}
\label{eq:Lagrange definition}
&L_{\text{total}}(\bW,\blambda)\defeq\sum_{k=1}^K\|\bc_{k}\circ\Delta \bomega_k\|_2^2+\sum_{k=1}^K\lambda_k(\gamma_k^\thr-\SNR_k)\nonumber\\
&=\sum_{k=1}^K\big(\|\bc_{k}\circ\Delta \bomega_k\|_2^2+\lambda_k(\gamma_k^\thr-\SNR_k)\big),
\end{align}
where $\bW=[\bomega_1, \cdots, \bomega_K]$ and $\blambda=[\lambda_1, \cdots, \lambda_K]^\Trans$ with $\lambda_k\geq0, \,\forall k$ is the Lagrange multiplier associated with the $k$th user inequality constraint. Our approach is decoupling Lagrangian on each user. Defining $L(\bomega_k,\lambda_k)\defeq\|\bc_{k}\circ\Delta \bomega_k\|_2^2+\lambda_k(\gamma_k^\thr-\SNR_k)$, we obtain the dual problem as
\vspace{-2mm}
\begin{equation}
\label{eq:Lagrange problem}
    \text {P4:}\quad~\underset{\blambda}{\max}\underset{\bW\in\Wset^{N\times K}}{\min}~\sum_{k=1}^KL(\bomega_k,\lambda_k),
\end{equation}
where $\Wset=\{\omega\in\Rset\mid0\leq\omega<2\pi\}$. Until now, we decoupled the problem on each user and wrote the duality Lagrangian of the problem after this decoupling. The dual function yields lower bounds on the optimal value of the problem. Here, all phase-shift variables are coupled to each other in $\SNR_k,\,\,\forall k$. To find an analytical solution to this problem, we try to decouple them on each RIS element, too, in the following two steps. At first, we derive the deviation of $L(\bomega_k,\lambda_k)$, then apply an approximation, and finally, we take the integral of~it.

%In other words, when optimizing $\bomega_k$, we only focus on the transition time of the current configuration ($\Delta \bomega_k$) and will not affect the next one $\Delta \bomega_{k+1}$ although $\bomega_k$ is in both of them. While we admit that this is not the optimal choice, this assumption simplifies the problem and leads to decoupling the Lagrangian problem on each user. 

\textit{1- Decoupling on each element $n$:} First we calculate gradient of $L(\bomega_k,\lambda_k)$ for $k$th user:
%\begin{equation}
%\label{eq:lagrangian deviative}
%    \nabla_{\bomega_k} L(\bomega_k,\lambda_k)=2\bc_{k}\circ\Delta\bomega_k+(\frac{\partial L(\bomega_k,\lambda_k)}{\partial \bs})\circ(\nabla_{\bomega_k} \bs)+(\frac{\partial L(\bomega_k,\lambda_k)}{\partial \bs^\Herm})\circ(\nabla_{\bomega_k} \bs^\Herm)=2\bc_{k}\circ\Delta\bomega_k-\lambda_k\big((\bz_k^\Herm)^\Trans\circ(\jj\bs_k)+(\bz_k)\circ(-\jj\bs_k^\Herm)^\Trans\big),
%\end{equation}
\begin{equation}
\begin{split}
&\nabla_{\bomega_k} L(\bomega_k,\lambda_k) = 2\bc_{k}\circ\Delta\bomega_k + \left(\frac{\partial L(\bomega_k,\lambda_k)}{\partial \bs}\right)\circ(\nabla_{\bomega_k} \bs) \\
&\quad\quad\quad\quad\quad\quad + \left(\frac{\partial L(\bomega_k,\lambda_k)}{\partial \bs^\Herm}\right)\circ(\nabla_{\bomega_k} \bs^\Herm) \\
& =2\bc_{k}\circ\Delta\bomega_k - \lambda_k\left((\bz_k^\Herm)^\Trans\circ(\jj\bs_k) + (\bz_k)\circ(-\jj\bs_k^\Herm)^\Trans\right),
\end{split}
\label{eq:lagrangian derivative}
\end{equation}
where $\bz_k^\Herm\defeq\bs_k^\Herm\bbM_k$ and note $\bbM^\Herm=\bbM$. Here, $\bs_k$ contains phase shifts for $k$th user is our variable but $\bz_k$ is the weighted average of all $[\bs_k]_n,\,\,\forall n$. Our observation for the channels with LOS dominant link is that applying a small change in each phase shift $[\bomega_k]_n$ does not lead to a sensible change in $\bz_k$. Exploiting this property, the following lemma can be derived:
\begin{lem}\label{lem: decoupling on elements}
    Assuming a small change in each $[\bomega_k]_n$ does not lead to a change in $\bz_k$. Under this condition the following decomposition is correct:
\begin{equation} \label{eq:after gradient theorem}
    L(\bomega_k,\lambda_k)\!=\!\!
    %\idotsint 
    \int \nabla_{\bomega_k} L(\bomega_k,\lambda_k)  \bd\bomega_k \approx \sum_{n=1}^N L_n([\bomega_k]_n,\lambda_k),
\end{equation}
where $L_n([\bomega_k]_n,\lambda_k)=[\bc_k]_n[\Delta\bomega_k]^2_n-2\lambda_k[\br_k]_n\cos([\bomega_k]_n-[\bphi_k]_n),\,\,\forall n$, and each element of $[\br_k]_n$ and $[\bphi_k]_n$ are the absolute and angle of $[\bz_k]_n$, respectively.
\end{lem}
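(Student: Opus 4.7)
The plan is to prove the decomposition by a gradient-integration (line-integral) argument, invoking the stated assumption to make the integrand of each partial derivative depend only on its own coordinate $[\bomega_k]_n$.

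First, I would freeze $\bz_k$ under the lemma's hypothesis. Since by assumption small variations of any single $[\bomega_k]_n$ leave $\bz_k$ (and hence $[\br_k]_n$, $[\bphi_k]_n$) essentially unchanged, the gradient expression in \eqref{eq:lagrangian derivative} can be read component-wise with $\bz_k$ treated as a constant vector. In this regime the Hessian's off-diagonal entries $\partial^2 L / (\partial [\bomega_k]_n \partial [\bomega_k]_{n'})$ vanish for $n \neq n'$, so $\nabla_{\bomega_k} L$ is the gradient of a separable function, and the line integral of $\nabla L$ along the straight path from any fixed reference point to $\bomega_k$ splits as a sum over $n$. This justifies interchanging the integral of the full gradient with the sum of coordinate-wise integrals in \eqref{eq:after gradient theorem}.

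Second, I would evaluate the $n$-th scalar partial. Writing $[\bz_k]_n = [\br_k]_n \e^{\jj[\bphi_k]_n}$ and $[\bs_k]_n = \e^{\jj[\bomega_k]_n}$, the bracketed term of \eqref{eq:lagrangian derivative} becomes
\begin{equation}
[\bz_k]_n^{*}\,\jj[\bs_k]_n - \jj[\bz_k]_n[\bs_k]_n^{*} = \jj[\br_k]_n\bigl(\e^{\jj([\bomega_k]_n - [\bphi_k]_n)} - \e^{-\jj([\bomega_k]_n - [\bphi_k]_n)}\bigr),
\end{equation}
which by Euler's identity equals $-2[\br_k]_n\sin([\bomega_k]_n - [\bphi_k]_n)$. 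Combined with the first term of \eqref{eq:lagrangian derivative}, this yields
\begin{equation}
\frac{\partial L}{\partial [\bomega_k]_n} = 2[\bc_k]_n [\Delta\bomega_k]_n + 2\lambda_k [\br_k]_n \sin([\bomega_k]_n - [\bphi_k]_n).
\end{equation}
Integrating with respect to $[\bomega_k]_n$ (and absorbing the constant of integration into the additive constants already discarded when passing to a sum of decoupled cost terms) produces exactly $L_n([\bomega_k]_n,\lambda_k) = [\bc_k]_n [\Delta\bomega_k]_n^2 - 2\lambda_k [\br_k]_n \cos([\bomega_k]_n - [\bphi_k]_n)$, as claimed.

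The main subtlety I anticipate is twofold. First, $[\bc_k]_n$ depends on the sign of $[\Delta\bomega_k]_n$, so strictly speaking the quadratic antiderivative is piecewise; however, on each side of the kink at $[\bomega_k]_n = [\bomega_{k-1}]_n$ the sign is fixed, so the indefinite integral $[\bc_k]_n [\Delta\bomega_k]_n^2$ is well defined locally and continuity at the kink makes the global expression consistent. Second, the decoupling is only approximate because $\bz_k$ is in truth a (weighted) sum over all $[\bs_k]_{n'}$. I would defend this approximation by noting, as already argued in the paragraph preceding the lemma, that for LOS-dominant RIS channels $\bz_k$ is driven by coherent superposition over all $N$ elements, so perturbing one entry contributes an $\mathcal{O}(1/N)$ correction; hence the cross terms neglected above are higher-order and the stated separable form is accurate to leading order, which is all that is required for using \eqref{eq:after gradient theorem} as a tractable surrogate in the subsequent AO algorithm.
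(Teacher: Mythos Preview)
Your proposal is correct and follows essentially the same route as the paper: simplify the $n$-th component of \eqref{eq:lagrangian derivative} via Euler's identity to a sine, then integrate coordinate-wise (with $\bz_k$ frozen) to recover $L_n$. Your additional remarks on Hessian separability, the piecewise nature of $[\bc_k]_n$, and the $\mathcal{O}(1/N)$ heuristic go beyond what the paper's terse proof records, but they do not change the underlying argument.
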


\begin{proof}
The following is valid $\forall n$:
    \begin{align}
    &[\bz_k^\Herm]_n(\jj[\bs_k]_n)+[\bz_k]_n(-\jj[\bs_k^\Herm]_n) \nonumber\\
    &=\jj[\br_k]_n\e^{-\jj[\bphi_k]_n}\e^{\jj[\bomega_k]_n}-\jj[\br_k]_n\e^{\jj[\bphi_k]_n}\e^{-\jj[\bomega_k]_n} \nonumber\\
    &=2[\br_k]_n\sin{([\bphi_k]_n-[\bomega_k]_n)}.
    \end{align} 
    Now we can take the integral for each element:
%    \begin{equation}
%        \int [\nabla_{\bomega_k} L(\bomega_k,\lambda_k)]_n\quad d[\bomega_k]_n=([\bc_k]_n([\bomega_k]_n-[\bomega_{k+1}]_n))^2-2\lambda_k[\br_k]_n\cos([\bomega_k]_n-[\bphi_k]_n), \,\,\forall n.
%    \end{equation}
\begin{align}
\label{eq: L_n}
\int &[\nabla_{\bomega_k} L(\bomega_k,\lambda_k)]_n\, d[\bomega_k]_n = ([\bc_k]_n([\bomega_k]_n-[\bomega_{k+1}]_n))^2 \nonumber \\
&\quad -2\lambda_k[\br_k]_n\cos([\bomega_k]_n-[\bphi_k]_n), \,\, \forall n,
\end{align}
where we omitted the constant in output because it will not affect the optimization. This concludes the proof.
\end{proof}

%\begin{lem}\label{lem:SNR}
%    Assume $\hat{\bbm}=\mathrm{diag}(\bh_{r}^\Herm)\ba_{\RIS}(\bPsi_\RIS)\in\Cset^N$ with very high $K-$factor in the received channel and $\bs=[\e^{\jj\bomega_1}, ..., \e^{\jj\bomega_N}]^\Trans$ are fixed and $\bs_\new=[\e^{\jj(\bomega_1+\bdelta_1)}, ..., \e^{\jj(\bomega_N+\bdelta_N)}]^\Trans$ where $\bdelta_n\in(-\delta_k,\delta_k),\,\,\forall n$ is a real number and $\delta_k$ is a constant. Also assume $r_0\e^{\jj\phi_0}=\hat{\bbm}^\Herm\bs$ and $r_\new\e^{\jj\phi_\new}=\hat{\bbm}^\Herm\bs_\new$. If $\delta_k$ is a small number then the differences of $|\phi_\new-\phi_0|$ and $|r_\new-r_0|$ can be upper-bounded by $\delta_k$ and $\frac{N^2\delta_k^2}{2r_0}$, respectively.
%\end{lem}

%\begin{proof}
%    First, focus on the phase $(\phi_\new)$ which its maximum change occurs when all $\bdelta_n=\delta_k$ or $-\delta_k$ so $|\phi_\new-\phi_0|\leq\delta_k$ (in this case $\SNR$ remains constant again). For changes of $r_\new$ we know $\frac{r_\new^2}{r_0^2}=\frac{|\hat{\bbm}^\Herm\bs_\new|^2}{|\hat{\bbm}^\Herm\bs|^2}=\frac{|\sum_{n=1}^N\bbm_n\e^{\jj(\bomega_n+\bdelta_n)}|^2}{r_0^2}\overset{(a)}{\approx}\frac{|\sum_{n=1}^N\hat{\bbm}_n\e^{\jj\bomega_n}(1+\jj\bdelta_n)|^2}{r_0^2}=\frac{r_0^2-(\sum_{n=1}^N\hat{\bbm}_n\e^{\jj\bomega_n}\bdelta_n)^2}{r_0^2}\overset{(b)}{\geq}1-\frac{(N\delta_k)^2}{r_0^2}$. Where $(a)$ holds for small $\delta_k$ based on Taylor series and $(b)$ holds when $k-$factor is $+\infty$. So approximately $|r_0-r_\new|\leq\frac{N^2\delta_k^2}{2r_0}$.
%\end{proof}

With the help of Lemma.~\ref{lem: decoupling on elements}, we decoupled problem P4 on each element. Now equivalently, we can find the minimizer of each $L_n([\bomega_k]_n,\lambda_k)$ to minimize the summation of them. We do it iteratively in order to satisfy our assumption in lemma.~\ref{lem: decoupling on elements}. 

\textit{2- Iteratively finding the minimizer of each $L_n([\bomega_k]_n,\lambda_k)$:} In each iteration (i.e. $i$th iteration), our assumption in Lemma~\ref{lem: decoupling on elements} is valid if each element of $\bphi_k^{(i)}$ changes in a range of $[-\delta_k,\delta_k]$ compared to its previous phase shift $\bphi_k^{(i-1)}$ with $\delta_k$ being a small real constant. We name the minimizer of $L_n([\bomega_k]_n,\lambda_k)$ as $[\bomega_k]_n^g$. This point can be found by line search in this range or in more efficient way we can reduce it to comparing some special points. All the stationary points of $L_n([\bomega_k]_n,\lambda_k)$ as $[\bomega_k]_{n,m}^*,\,\,\forall m\in\Zset$. $[\bomega_k]_n^g$ is one of the points in set $\{0,2\pi,-\delta_k,\delta_k,[\bomega_k]_{n,m}^*\}\cap[-\delta_k,\delta_k]$\footnote{The proposed iterative approach is not exactly the same as the gradient descend since instead of moving along the gradient, in each iteration, we find the minimizer of $L_n([\bomega_k]_n,\lambda_k)$ in a small interval of length $2\delta_k$.} 

%\footnote{This is definitely not gradient descend method because $L_n([\bomega_k]_n,\lambda_k)$ is not continuous in $2m\pi,\,\,\forall m\in\Zset$ so derivative is not valid for this points.}.

%\textit{3- Stationary points:} To derive the stationary points, we can set all $N$ equations in Eq.~\ref{eq:lagrangian derivative} equal to zero. However, because of the problem of phase-element coupling, solving them is not straightforward. Exploiting the assumption in Lemma~\ref{lem: decoupling on elements} again, we should solve the following decoupled $N$ non-linear equations:
%\begin{equation}
%    \label{eq:with sine}
%2\bc_{k}\circ\Delta\bomega_k+2\lambda_k\br_k\circ\sin(\bomega_k-\bphi_k)=\boldsymbol{0}_N,
%\end{equation}
%where $\sin(\bomega_k-\bphi_k)$ denotes the element-wise sine of the element-wise subtraction of vectors $\bomega_k-\bphi_k$. Deriving a solution for this set of $N$ non-linear equations in a closed form is impossible. With replacing the sine function with its linear Taylor series part around $\bomega_k^{(i-1)}$ in $i$th iteration can be a good approximation of the stationary points. The solution is:
%\begin{align}
%    \label{eq: stationary points}
%&[\bomega_k]_{n,m}^{*(i)}=\bomega_k^{(i-1)}+(\bc_{k}\circ(\bomega_{k+1}^{(i-1)}-\bomega_{k}^{(i-1)})\nonumber\\
%&-\lambda_k\br_k\circ\sin(\bomega_k^{(i-1)}-\bphi_k))\oslash(\bc_{k}+\lambda_k\br_k\circ\cos(\bomega_k^{(i-1)}).
%\end{align}

\textbf{Algorithm and complexity analysis:} The proposed algorithm to obtain a suboptimal solution to P1 is summarized in Alg. \ref{alg:cap}. In each iteration, we focus on one user and change its phase shift in a small range to reduce the transition time from $\bomega_{k-1}$ to $\bomega_{k}$. 
%The range boundaries $(-\delta_k,\delta_k)$ can be depended on $\gamma_k^\thr$ of each user. Less $\gamma_k^\thr$, more $\delta_k$. 
After finding a new phase-shift for $k$th user, we check the SNR constraint. If the constraint is still valid, with decreasing $\lambda_k$ we try to find other potential solutions, however, if the constraint is not valid anymore, increasing $\lambda_k$ prioritises the SNR constraint.
%\vahid{Give one two sentence about the algorithm or if we do not have space, remove the subsection and give the explanation at the end of previous section.} 
In this algorithm, the dominant complexity is related to finding $\bomega_k^g$ in each iteration. Assume the complexity of line search over the range is $L$ then the dominant complexity is $\bigO(I_\tmax NKL)$.
%\mohamadreza{Please read until here}

\begin{algorithm}[t]
\caption{Proposed Algorithm for Problem (P4)}\label{alg:cap}
\begin{algorithmic}[1]
\STATE \textbf{Initialize:} Estimate $\bbM_k,\,\bomega_k^{(0)}= \bphi^\LOS_k,\, [\bs_k]_n^{(0)}=\e^{\jj[\bomega_k^{(0)}]_n},\,\forall n,\,\br^{(0)}_k\e^{-\jj \bphi^{(0)}_k}=\bbM_k\bs_k^\Herm$, $\,\lambda^{(0)}_k>0$, $\bq_k=\bq_\LOS$, $0<\delta_k<\pi$, $\forall k=1, \cdots, K$. Set $0<\alpha<1$, and $I_{\tmax}$.
\FOR{$i=1:I_{\tmax}$}
    \FOR{$k=1:K$}

    \STATE Find $[\bomega_k]^{(i)}_n=[\bomega_k]_n^g\in(-\delta_k,\delta_k)$ as a minimizer of $L_n([\bomega_k]_n,\lambda_k),\,\forall n$ in Eq.~\ref{eq: L_n} with line search.
    \IF{$\SNR^{(i)}_k<\gamma_k^\thr$}
    \STATE Update $\bomega^{(i)}_k=\bomega^{(i-1)}_k$, and $\lambda_k^{(i)}=\frac{\lambda_k^{(i-1)}}{\alpha}$.
    \ELSE
    \STATE Update $\lambda_k^{(i)}=\alpha\lambda_k^{(i-1)}$.
    \ENDIF
    \STATE Update $[\bs_k]_n^{(i)}=\e^{\jj[\bomega_k^{(i)}]_n},\,\forall n$, $\br^{(i)}_k\e^{-\jj \bphi^{(i)}_k}=\bbM_k\bs_k^\Herm$.
    \ENDFOR
\ENDFOR
\end{algorithmic}
\end{algorithm}

%\section{Figures and Tables}

%\subsection{Figures}

%Figures should include a caption beneath the figure (in Times Roman 10 pt.) and the caption should include the Figure number in bold. Figure~\ref{fig:asc} provides an example.

%\begin{figure}[tbp]
%  \centering
%  \includegraphics[width=77mm]{ascension_island}
%  \caption{Ascension Island (7.9\textdegree S, 14.8\textdegree W showing
%    the ESA Tracking Station which was the location of the fixed receiver and the road (shown in black) along which the mobile measurements were made).}
%  \label{fig:asc}
%\end{figure}

%\subsection{Tables}

%Likewise, tables should include a caption (in Times Roman 10 pt.) but this time above the table. The caption should include the table number in bold.

\section{Performance Comparison}
\label{simulation result}

%\textcolor{red}{This section is not still complete}
\subsection{Simulation Setup}
We employ the simulation configuration for coverage extension illustrated in Fig. \ref{fig:IRS}. We assume there are two users located in different directions, $(\theta,\phi)=(-10^\circ,33^\circ)$ and $(\theta,\phi)=(-10^\circ,-33^\circ)$. The BS features a $4\times4=16$ UPA positioned at $[30,0,10]$~m. The RIS comprises a UPA located at $[0,50,5]$~m, consisting of $N_\y\times N_\z$ square tiles along the $\y$- and $\z$-axes, respectively. The element spacing for UPAs at both the BS and RIS corresponds to half the wavelength. The UEs have a single antenna. The noise variance is computed as $\sigma_n^2=WN_0N_{\rm f}$ with $N_0=-174$~dBm/Hz, $W=20$~MHz, and $N_{\rm f}=6$~dB. We assume $28$~GHz carrier frequency, $\beta=-61$~dB at $d_0=1$~m, and $\gamma_\thr=10$~dB. Moreover, we adopt $\eta = (3.5,2,2)$ and  $K_f=(0,10,10)$ for the BS-UE, BS-RIS, and RIS-UE channels, respectively. The main focus of this analysis is the SNR comparison. As a benchmark, we consider RIS design based on anomalous reflection \cite{najafi2021} which is unaware of the transition behavior of the LC-RIS. Also for the factor of each element in response time, we choose $c^-\propto\sqrt{\tau_c^-}$ and $c^+\propto\sqrt{\tau_c^+}$ for negative and positive changes, respectively such that the cost function in Eq. (\ref{eq:optimization 1}a) linearly scales with time constants like in Eqs. (\ref{eq: time positive}) and (\ref{eq: time negative}). The parameters used in this simulation are $P_t=47$~dBm, $\alpha=0.985$, $I_\tmax=100$, and $\delta_k=\frac{\pi}{8},\,\forall k$.

\subsection{Simulation Result}
Fig.~\ref{fig:TDMA_time} shows the SNR (dB) when the RIS configuration is switched every 60 ms to serve the users. This specific time interval was chosen to demonstrate the performance of the proposed algorithm in real-time conditions. We observe in the figure, that our proposed algorithm achieves the SNR requirement (i.e., $10$ dB) much faster than the benchmark which simply maximizes the final SNR without accounting for the transition time. This confirms that the proposed algorithm finds the phase-shift configuration that meets the SNR requirement without significantly changing the existing RIS phase-shifts. In contrast, the benchmark scheme simply finds the phase-shift configuration that maximizes the SNR which often implies a significant change of the RIS phase-shifts compared to the current phase-shifts. In Fig.~\ref{fig:Percentage_rate}, we plot the effective data rate is given by
\vspace{-2mm}
\begin{equation}
  \label{eq:rate}
    R=\frac{\max(T_s-T_c,0)}{T_s}\log{(1+\mathrm{SNR_{thr}})},
\end{equation}
where $T_c$ is the time RIS needs to reconfigure and reach $\mathrm{SNR_{thr}}$ and $T_s$ is the time interval that RIS switches between serving the users (determined by the application scenario, e.g., delay requirement, users' mobility). When $T_s$ is a very small number, both algorithms cannot serve users because the required time to change their phase configuration is not enough. With increasing $T_s$, by applying the proposed algorithm, the user's SNR surpasses the threshold faster. This increases the ratio $\frac{\max(T_s-T_c,0)}{T_s}$, leading to higher data rates. With $T_s$ approaching infinity, this proportion goes to $\log{(1+\mathrm{SNR_{thr}})}$ for all approaches because there is enough time to reach the final phase configuration.

%\begin{table}[htbp]
%\centering
%\caption{Comparison of required configuration time for the benchmark and proposed design across various $\gamma^\thr$. The table presents different SNR thresholds (first row), the required configuration time for the benchmark design (second row), and the proposed design (third row), along with the percentage of time reduction achieved by the proposed design (fourth row).}
%\begin{tabular}{|c|c|c|c|c|c|c|c|c|c|c|}
%\hline
%$\gamma_k^\thr$ (dB) $\forall k$: & 4.150 & 7.160 & 8.921 & 10.170 & 11.139 & 11.931 & 12.601 & 13.181 & 13.692 & 14.150 \\ \hline
%$T_c^\text{Benchmark}$~(ms): & 4.348 & 6.087 & 7.826 & 9.565 & 11.522 & 14.130 & 16.957 & 21.304 & 28.043 & 45.435 \\ \hline
%$T_c^\text{Proposed}$~(ms): & 0.652 & 0.652 & 0.870 & 2.826 & 5.870 & 9.783 & 16.087 & 21.304 & 28.043 & 45.435 \\ \hline
%\% of Reduction: & 85 & 89 & 88 & 70 & 49 & 31 & 5 & 0 & 0 & 0 \\ 
%\hline
%\end{tabular}

%\label{tab:your_label}
%\end{table}

\begin{figure}
    \centering
    \includegraphics[width=0.5\textwidth]{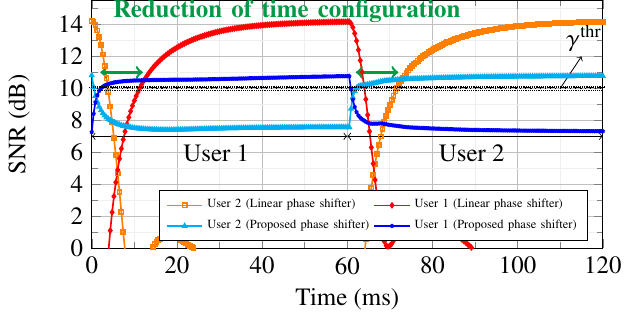}\vspace{-3mm}
    \caption{SNR (dB) comparison between linear phased shifting and proposed algorithm with 60 ms serving the users.}
    \label{fig:TDMA_time}
    \vspace{-3mm}
\end{figure}

\begin{figure}
    \centering
    \includegraphics[width=0.5\textwidth]{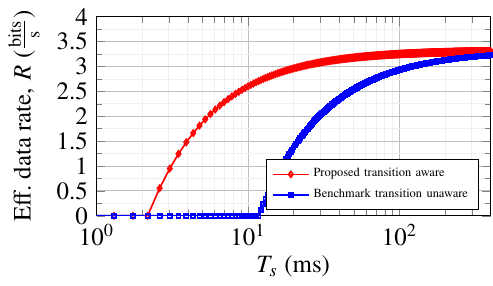}\vspace{-3mm}
    \caption{The effective data rate ($R$) versus switching time between two users, $T_s$, for two different algorithms.}
    \label{fig:Percentage_rate}
    \vspace{-3mm}
\end{figure}

\section{Conclusion}
\label{conclusion}
In this study, we have first modeled the response time of LC-RISs. Subsequently, we have developed an algorithm to jointly design the fast RIS reconfiguration for serving $K$ users in a TDMA manner. The simulation results demonstrated the superiority of the proposed transition-aware design over a transition-aware benchmark from the literature.

%\section{Acknowledgements}

%Delbari and Jamali’s work was supported in part by the Deutsche Forschungsgemeinschaft (DFG, German Research Foundation) within the Collaborative Research Center MAKI (SFB 1053, Project-ID 210487104) and in part by the LOEWE initiative (Hesse, Germany) within the emergenCITY center. Neuder and Alejandro Jim\'{e}nez-S\'{a}ez's work was supported by Deutsche Forschungsgemeinschaft (DFG, German Research Foundation) – Project-ID 287022738 – TRR 196 MARIE within project C09. 
%\bibliographystyle{unsrt}

%\section{Export}
%All submissions will be electronic and in a PDF format (*.pdf) and all fonts must be embedded in the submitted PDF document. Embedded Type 1 or True Type fonts are required in the submitted PDF file as subset fonts. Type 3 fonts (bitmaps) will not be accepted. Additional information on PDF creation is available on the submission Web site.


\begin{thebibliography}{99}

\bibitem{di2019smart}
M.~Di~Renzo \emph{et~al.}, ``Smart radio environments empowered by {AI}
  reconfigurable meta-surfaces: {An} idea whose time has come,'' vol. 129, May
  2019.

\bibitem{yu2021smart}
X.~Yu, V.~Jamali, D.~Xu, D.~W.~K. Ng, and R.~Schober, ``Smart and
  reconfigurable wireless communications: From RIS modeling to algorithm
  design,'' \emph{IEEE Wireless Communications}, vol.~28, no.~6, pp. 118--125,
  2021.

\bibitem{Tang2021}
W.~Tang \emph{et~al.}, ``Wireless communications with reconfigurable
  intelligent surface: Path loss modeling and experimental measurement,''
  \emph{IEEE Transactions on Wireless Communications}, vol.~20, no.~1, pp.
  421--439, 2021.

\bibitem{najafi2021}
M.~Najafi, V.~Jamali, R.~Schober, and H.~V. Poor, ``Physics-based modeling and
  scalable optimization of large intelligent reflecting surfaces,'' \emph{IEEE
  Transactions on Communications}, vol.~69, no.~4, pp. 2673--2691, 2021.

\bibitem{ghannam2021reconfigurable}
R.~Ghannam \emph{et~al.},
  ``Reconfigurable surfaces using fringing electric fields from nanostructured
  electrodes in nematic liquid crystals,'' \emph{Advanced Theory and
  Simulations}, vol.~4, no.~7, p. 2100058, 2021.

\bibitem{guirado2022mm}
R.~Guirado, G.~Perez-Palomino, M.~Ca{\~n}o-Garc{\'\i}a, M.~A. Geday, and
  E.~Carrasco, ``mm-wave metasurface unit cells achieving millisecond response
  through polymer network liquid crystals,'' \emph{IEEE Access}, vol.~10, pp.
  127\,928--127\,938, 2022.

\bibitem{aboagye2022design}
S.~Aboagye, A.~R. Ndjiongue, T.~M. Ngatched, and O.~A. Dobre, ``Design and
  optimization of liquid crystal RIS-based visible light communication
  receivers,'' \emph{IEEE Photonics Journal}, vol.~14, no.~6, pp. 1--7, 2022.
  
\bibitem{ndjiongue2021re}
A. R. Ndjiongue, T. M. N. Ngatched, O. A. Dobre, and H. Haas, ``Re-configurable intelligent surface-based VLC receivers using tunable liquid-crystals: The concept,'' \emph{Journal of Lightwave Technology}, vol. 39, no. 10, pp. 3193--3200, 2021, IEEE.


\bibitem{Robin2023Compact}
R.~Neuder, D.~Wang, M.~Schüßler, R.~Jakoby, and A.~Jiménez-Sáez, ``Compact
  liquid crystal-based defective ground structure phase shifter for
  reconfigurable intelligent surfaces,'' in \emph{2023 17th European Conference
  on Antennas and Propagation (EuCAP)}, 2023, pp. 1--5.

\bibitem{jimenez2022liquid}
A.~Jim{\'e}nez-S{\'a}ez, A.~Asadi, R.~Neuder, M.~Delbari, and V.~Jamali,
  ``Reconfigurable intelligent surfaces with liquid crystal technology: A
  hardware design and communication perspective,'' \emph{arXiv:2308.03065}, 2023, unpublished.

\bibitem{garbovskiy2012liquid}
Y.~Garbovskiy \emph{et~al.}, ``Liquid crystal phase
  shifters at millimeter wave frequencies,'' \emph{Journal of Applied Physics},
  vol. 111, no.~5, p. 054504, 2012.

\bibitem{neuder2023architecture}
R.~Neuder, M.~Sp{\"a}th, M.~Sch{\"u}{\ss}ler, and A.~Jim{\'e}nez-S{\'a}ez,
  ``Architecture for sub-100 ms liquid crystal reconfigurable intelligent
  surface based on defected delay lines,'' 2023, unpublished.

\bibitem{xu2014image}
D.~Xu \emph{et~al.}, ``Image sticking in liquid crystal displays with lateral electric
  fields,'' \emph{Journal of Applied Physics}, vol. 116, no.~19, 2014.

\bibitem{nobles2021eight}
J.~E. Nobles \emph{et~al.}, ``Eight-element liquid crystal
  based 32 GHz phased array antenna with improved time response,''
  \emph{Engineering Research Express}, vol.~3, no.~4, p. 045033, 2021.

\bibitem{ericksen1961conservation}
J.~L. Ericksen, ``Conservation laws for liquid crystals,'' \emph{Transactions
  of the Society of Rheology}, vol.~5, no.~1, pp. 23--34, 1961.

\bibitem{leslie1968some}
F.~M. Leslie, ``Some constitutive equations for liquid crystals,''
  \emph{Archive for Rational Mechanics and Analysis}, vol.~28, pp. 265--283,
  1968.

\bibitem{jakeman1972electro}
E.~Jakeman and E.~Raynes, ``Electro-optic response times in liquid crystals,''
  \emph{Physics Letters A}, vol.~39, no.~1, pp. 69--70, 1972.

\bibitem{wang2005studies}
H.~Wang, ``Studies of liquid crystal response time,'' \emph{University of Central Florida}, 2005.

\bibitem{wu1990experimental}
S.-T. Wu and C.-S. Wu, ``Experimental confirmation of the osipov-terentjev
  theory on the viscosity of nematic liquid crystals,'' \emph{Physical Review
  A}, vol.~42, no.~4, p. 2219, 1990.

\bibitem{hassanein2019optimizing}
G. Nabil Hassanein, ``Optimizing low frequency electro-optic response of nematic liquid crystals,'' \emph{Optik}, vol. 182, pp. 269--274, 2019, Elsevier.

\bibitem{tse2005fundamentals}
  D. Tse and P. Viswanath, ``Fundamentals of wireless communication,'' \emph{Cambridge university press}, 2005.

\bibitem{Zhang2020Capacity}
S.~Zhang and R.~Zhang, ``Capacity characterization for intelligent reflecting
  surface aided MIMO communication,'' \emph{IEEE Journal on Selected Areas in
  Communications}, vol.~38, no.~8, pp. 1823--1838, 2020.


\end{thebibliography}
\end{document}